\newtheorem{definition}{Definition}
\newtheorem{lemma}{Lemma}
\newtheorem{theorem}{Theorem}
\title{Switches in Eulerian graphs}
\author{Ahad N. Zehmakan\footnote{ETH Zurich, Department of Computer Science, abdolahad.noori@inf.ethz.ch}, Jerri Nummenpalo\footnote{ETH Zurich, Department of Computer Science, njerri@inf.ethz.ch}, Alexander Pilz\footnote{Graz University of Technology, Institute of Software Technology, apilz@ist.tugraz.at},\\ Daniel Wolleb-Graf\footnote{ETH Zurich, Department of Computer Science, daniel.graf@inf.ethz.ch}}
\date{}
\begin{document}

\maketitle


\begin{abstract}
We show that the graph transformation problem of turning a simple graph into an Eulerian one by a minimum number of single edge switches is NP-hard.
Further, we show that any simple Eulerian graph can be transformed into any other such graph by a sequence of 2-switches (i.e., exchange of two edge pairs), such that every intermediate graph is also Eulerian.
However, finding the shortest such sequence also turns out to be an NP-hard problem.
\end{abstract}


\section{Introduction}\label{sec:intro}
In an \emph{edge modification problem}, we are given a graph with the goal of obtaining another graph that is of a certain class by modifying its edge set (i.e., by applying a small number of edge removals or edge additions).
Many such problems have been addressed in algorithmic graph theory (see, e.g.,~\cite[Table~1]{edge_modification_problems}), and in particular, the problem of making a graph Eulerian is already well-studied.
In this work, we contribute further results to this area;
we consider problems of modifying graphs by \emph{$k$-switches}.

\begin{definition}
For a positive integer $k$ a \emph{$k$-switch} on a graph $G = (V,E)$ is the operation of removing $k$ edges from $E$ and adding $k$ edges to obtain a simple graph $G' = (V, E')$ with $|E'| = |E|$.
\end{definition}

Adding another variant to previous results, we show that finding the minimum number of $1$-switches to make a simple graph Eulerian is NP-hard.
Recall that a graph is Eulerian if it is connected and if every vertex has even degree. 
We then turn to the problem of modifying Eulerian graphs by $2$-switches;
we show that for any two Eulerian graphs on $n$ vertices with the same number of edges, one can be transformed into another by a sequence of $2$-switches with the invariant that each intermediate graph is also Eulerian.

\paragraph{Related work.}
An overview of edge modification problems to obtain certain graphs is given in~\cite{edge_modification_problems}.
As can be seen there, many of these problems are NP-complete.
However, for Eulerian graphs, there are some surprising results.
The well-known Chinese Postman Problem (shown to be in P by Edmonds and Johnson~\cite{edmonds_johnson}) can be considered as a way of augmenting a graph to an Eulerian multigraph by duplicating as few edges as possible.
We focus on undirected simple graphs (some of the references also cover digraphs and multigraphs). Dorn et al.~\cite{eulerian_extension} show that the problem of adding the minimum number of edges to make a given graph Eulerian can be done in polynomial time.
The problem of removing the minimum number of edges, however, is NP-hard but in FPT~\cite{eulerian_deletion}.
Dabrowski et al.~\cite{eulerian_editing} recently considered the variant that allows both adding and removing edges;
they show that the problem of determining the minimum number of required operations is tractable.
We consider the variant in which the numbers of added and removed edges have to match and show that the problem becomes NP-hard again.

\begin{theorem}\label{thm:eulerian_flip_NP_complete}
Given a graph $G$ and an integer $\ell$, the problem of deciding whether $G$ can be turned into an Eulerian graph using at most $\ell$ $1$-switches is NP-complete.
\end{theorem}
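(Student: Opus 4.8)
The plan is to prove membership in NP directly and then obtain hardness by a reduction from an exact-cover problem, with the whole argument driven by a degree-parity counting bound. For membership, a certificate is simply the explicit list of at most $\ell$ switches; in polynomial time one verifies that each switch produces a simple graph and that the final graph is connected with all degrees even. The engine of the lower bound is the following count. Let $O$ be the set of odd-degree vertices of $G$, so $|O|$ is even by the handshake lemma. A single $1$-switch removes one edge and adds one edge, hence it flips the degree parity of exactly the vertices in the symmetric difference of the two edge-endpoint sets, a set of size $2$ or $4$. The number of odd vertices therefore drops by at most the number of flipped vertices that are currently odd, which is at most $4$, so at least $\lceil |O|/4\rceil$ switches are needed regardless of connectivity. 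Moreover a switch realizes the full reduction of $4$ \emph{only} if it removes an existing edge between two odd vertices and adds a non-existing edge between two other odd vertices; involving any even-degree vertex creates a new odd vertex and forfeits part of the budget.

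For the reduction I would use exact cover by $4$-element sets, a standard NP-complete problem: given a ground set $U$ and a family $\mathcal{F}$ of $4$-subsets, decide whether some subfamily partitions $U$. I would construct $G$ so that its odd-degree vertices correspond to the elements of $U$, arranged so that $4 \mid |U|$ and $|O| = |U|$, and so that a group of four odd vertices is repairable by one switch if and only if it is a set of $\mathcal{F}$. Setting the budget to the parity bound $\ell = |U|/4$ then forces, by the counting argument, every switch in a valid solution to reduce $|O|$ by exactly $4$, so each repairs a distinct fixable quadruple and the quadruples used must partition $O$. Thus a sequence of at most $\ell$ switches exists precisely when $\mathcal{F}$ admits an exact cover. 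To keep connectivity from interfering, I would wire all gadgets onto a common backbone through even-degree connector vertices, making $G$ connected from the outset with every removable edge lying on a cycle; then no switch can disconnect $G$ and connectivity never consumes a switch.

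The main obstacle is realizing the equivalence ``fixable if and only if in $\mathcal{F}$'' inside a simple graph. By the parity observation, a quadruple is fixable exactly when it splits into one adjacent and one non-adjacent pair, so I must control every pairwise adjacency and non-adjacency among the odd vertices, and in particular rule out both every quadruple \emph{not} in $\mathcal{F}$ and every switch that straddles two intended quadruples. Because an element of $U$ typically lies in several sets of $\mathcal{F}$, a single odd vertex cannot carry globally consistent adjacency demands; this forces per-occurrence copies of each element linked by equality gadgets, a standard but delicate piece of gadgeteering. The bulk of the technical work will be in verifying that these gadgets create no spurious fixable quadruples, leave $|O|$ and the optimum count exactly as intended, and preserve (say) $2$-edge-connectivity throughout the entire switch sequence; the tight counting bound of the first paragraph is what makes the budget argument precise enough for the equivalence to close.
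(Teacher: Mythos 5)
Your parity-counting bound is exactly the engine of the paper's proof as well: each $1$-switch flips the parity of at most four vertices, so at least $\lceil|O|/4\rceil$ switches are needed, with equality per switch only when an edge between two odd vertices is removed and a non-edge between two other odd vertices is added. That part is correct and well argued. But the reduction itself is not a proof: you explicitly defer the construction of the gadgets that would make a quadruple ``fixable if and only if it is a set of $\mathcal{F}$,'' and this is precisely where the difficulty lives. A quadruple is repairable by one switch exactly when it contains an edge and a disjoint non-edge among the odd vertices, so the family of fixable quadruples is forced to have a very rigid structure (it is determined by the adjacency relation restricted to $O$), and an arbitrary instance of Exact Cover by $4$-Sets cannot be encoded into it without the per-occurrence copies and equality gadgets you mention. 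You have not shown that such gadgets exist, that they introduce no spurious fixable quadruples or cross-gadget switches, or that they keep the optimum at exactly $|U|/4$; each of these is a substantive claim, and without them the equivalence does not close. There is also a second-order issue you gloss over: the switches are applied sequentially, so a later switch could in principle touch an edge created by an earlier one, and ruling this out (or showing it is never beneficial) is part of making the budget argument airtight.

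For comparison, the paper avoids all of this by reducing from Hamiltonicity of $3$-regular graphs: it attaches $n$ pendant paths of length $2$ to one vertex $v$ of a cubic graph $G$, producing $2n$ odd vertices, and sets $\ell=n/2$. The same counting bound then forces every switch to move a matching edge of $G$ onto the degree-$1$ path endpoints, and\,---\,this is the key difference from your plan\,---\,the \emph{connectivity} requirement of Eulerian graphs does the remaining work: the surviving subgraph $G\setminus M$ is $2$-regular and must be connected, i.e.\ a Hamiltonian cycle. Where you try to neutralize connectivity with a backbone and put all the hardness into the parity-repair combinatorics, the paper uses connectivity as the source of hardness and needs no gadgets at all. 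If you want to salvage your route, you would have to actually exhibit the equality gadgets and prove the three verification claims above; as it stands, the argument has a genuine gap.
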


It is usually required that after a $k$-switch, the resulting graph belongs to a prescribed class (e.g., maximal planar graphs~\cite{flips_planar_graphs}, a graph with a certain degree sequence~\cite{taylor}, etc.).
One may therefore ask whether an Eulerian graph can be transformed into any other by a sequence of $k$-switches, such that every intermediate graph is also Eulerian.
Clearly, this problem does not make sense for $1$-switches.
We answer this question in the affirmative for $k=2$ which of course implies the result for all $k\geq 2$.
For the special case of cycles, it follows from a result by \citet{solomon2003sorting} that finding the shortest such sequence is NP-hard.

\begin{theorem}\label{thm:connected}
\sloppy
Let $G = (V, E)$ and $H = (V, F)$ be any two labeled Eulerian graphs with $m = |E| = |F|$ edges.
Then there is a sequence $(G = G_0, G_1, \dots , G_\ell = H)$ of labeled graphs such that each graph $G_i$ is Eulerian and $G_{i+1}$ can be obtained from $G_i$ by a 2-switch.
\end{theorem}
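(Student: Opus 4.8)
The plan is to fix, for each pair $(n,m)$, a single canonical Eulerian graph $C_{n,m}$ and to show that every Eulerian graph on $n$ vertices with $m$ edges can be driven to $C_{n,m}$ by 2-switches that keep every intermediate graph Eulerian. Since a 2-switch is its own inverse and ``Eulerian'' is a symmetric property, reversing the sequence built for $H$ and concatenating it after the one for $G$ yields $G \rightsquigarrow C_{n,m} \rightsquigarrow H$, which proves the theorem. To control the even-degree condition I work only with parity-preserving 2-switches. A 2-switch deleting edges $e_1,e_2$ and inserting $f_1,f_2$ changes the degree parity of a vertex $v$ by the degree of $v$ in the four-edge subgraph $\{e_1,e_2,f_1,f_2\}$; since in a simple graph these four edges are distinct, all parities are preserved exactly when this subgraph is $2$-regular, i.e. a $4$-cycle. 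Deleting two opposite edges of such a $4$-cycle and inserting the other two is degree-preserving, while deleting two adjacent edges and inserting the complementary two transfers two units of degree from one vertex to another. Both keep all degrees even, so the only invariant that genuinely needs work is connectivity.

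Two structural facts will control connectivity. First, a graph with all degrees even decomposes into cycles, so it is bridgeless and each component is $2$-edge-connected. Second, because an Eulerian graph is connected with no isolated vertex, every vertex has degree at least $2$; I maintain $\deg\ge 2$ everywhere so that no switch can strand a vertex. With these in hand I split the reduction into two phases. Fix any convenient construction of a connected Eulerian graph $C_{n,m}$ with $m$ edges (one exists as $m\ge n$), and let $d^*$ be its degree sequence. In Phase~A I use the degree-transfer switches to turn the even degree sequence of $G$ into $d^*$, by repeatedly moving two units of degree from a vertex above its target to one below it; at the level of sequences any two even sequences of the same sum and feasible range are linked by such elementary moves. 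In Phase~B the degree sequence already equals $d^*$, and I use the degree-preserving switches to reach the specific labeled realization $C_{n,m}$.

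Phase~B I expect to follow from the classical switching theory for a fixed degree sequence: any two realizations are connected by degree-preserving $2$-switches, and, restricting to connected realizations, the switches can be chosen so that connectivity is preserved throughout (this is the content of constrained-switching results of the type in \cite{taylor}). As degrees, and in particular their parities, never change in Phase~B, every intermediate graph remains Eulerian, so this phase reduces to invoking or reproving that statement. The crux is therefore Phase~A, where I must show that whenever the current degree sequence differs from $d^*$ there is a degree-transfer switch that both reduces the discrepancy and preserves connectivity. Given a vertex $a$ above target (hence $\deg(a)\ge 4$) and a deficient vertex $d$, I look for two neighbors $b,c$ of $a$ with $bd,cd\notin E$, so that deleting $ab,ac$ and inserting $bd,cd$ is a legal simple switch; the freedom coming from $\deg(a)\ge 4$ and from being allowed to route through $d$ (or, if $d$ is already adjacent to all convenient neighbors of $a$, through an auxiliary vertex via a harmless preliminary swap) should guarantee such a choice exists.

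The main obstacle is exactly this connectivity bookkeeping in Phase~A. Deleting the two edges $ab,ac$ from a merely $2$-edge-connected graph can in principle disconnect it, so $b,c$ cannot be chosen blindly. The point to nail down is that among the at least four neighbors of $a$ one can always select a pair $b,c$ (interposing a preliminary swap if necessary) for which the inserted edges $bd,cd$ reconnect whatever the deletions might separate, using bridgelessness to guarantee that each deleted edge lies on a cycle and that $a$ retains at least two incident edges. I expect the delicate step to be verifying that a switch which simultaneously makes progress, preserves connectivity, and respects simplicity is always available, including the dense corner cases where the desired endpoints are already adjacent; by contrast, the reduction to canonical form, the parity characterization, and Phase~B should be routine.
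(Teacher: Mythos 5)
Your architecture is the same as the paper's: equalize the degree sequences by parity-preserving ``degree-transfer'' $2$-switches that keep the graph connected, then finish with the constrained-switching theorem of Taylor~\cite{taylor} for a fixed degree sequence (your Phase~B is exactly that invocation, and routing through a single canonical graph rather than a canonical degree sequence is immaterial). The genuine gap is that the step you yourself single out as the crux---that a connectivity-preserving degree-transfer switch always exists---is precisely the paper's Lemma~\ref{lem:increasing}, and you leave it at ``I expect the delicate step to be verifying that\dots'' rather than verifying it. Bridgelessness alone does not do it: deleting two edges at a vertex of a $2$-edge-connected graph can disconnect it, and the inserted edges need not reconnect the pieces. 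The missing idea is a repair step: with $d(u)<d(v)$, evenness gives $d(u)\le d(v)-2$, so by counting there are two neighbors $a,b$ of $v$ outside $N(u)\cup\{u\}$ (no ``auxiliary preliminary swap'' is needed for mere existence); try the switch $av,bv\to au,bu$, and if it disconnects the graph, let $K$ be the resulting component of $v$. Since $d(v)\ge 4$, the vertex $v$ keeps two neighbors $a',b'$ inside $K$, and neither can be adjacent to $u$ (otherwise $K$ would not have been cut off). Performing instead the switch $a'v,b'v\to a'u,b'u$ yields a connected graph: $v$ still reaches $u$ via $a$ or $b$, and the rest of $K$ reaches $u$ via $a'$, $b'$, or $v$. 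Without this (or an equivalent) argument, Phase~A does not go through.

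A secondary point: the existence of a legal simple switch in your Phase~A (two neighbors $b,c$ of the surplus vertex with $bd,cd\notin E$) is only guaranteed when the donor has strictly larger degree than the receiver. If you aim at the degree sequence of an arbitrary canonical $C_{n,m}$, you can be forced to push degree onto a vertex that is already the larger of the pair, in which case the receiver may be adjacent to all neighbors of the donor and no switch of your form exists; your parenthetical ``harmless preliminary swap'' is doing real, unexamined work there. The paper sidesteps this by normalizing to the near-regular sequence (all degrees $d$ or $d+2$): whenever the current sequence differs from that target, there is automatically a surplus--deficit pair with the donor strictly above the receiver, so Lemma~\ref{lem:increasing} always applies. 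You should fix $C_{n,m}$, or at least the target of Phase~A, to be near-regular for the same reason.
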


We rely on a similar result regarding graphs with a given degree sequence by Taylor~\cite{taylor}.
For a recent account on 2-switches in that setting see, e.g.,~\cite{barrus}.

\section{1-switches to Eulerian graphs}
\begin{proof}[Proof of Theorem~\ref{thm:eulerian_flip_NP_complete}]
The problem is clearly in NP.
We use an approach similar to the one of Cygan et al.~\cite{eulerian_deletion}.
We reduce from the NP-complete problem of determining whether a 3-regular graph is Hamiltonian~\cite{garey1976some}.
Let $G$ be a 3-regular graph with $n$ vertices and let $v$ be a vertex of $G$.
Note that $n$ is even.
Form a new graph $G'$ from $G$ by attaching $n$ paths of length 2 to the vertex $v$ (see \figurename~\ref{fig:hardness example}).
\begin{figure}[t]
\centering
\includegraphics{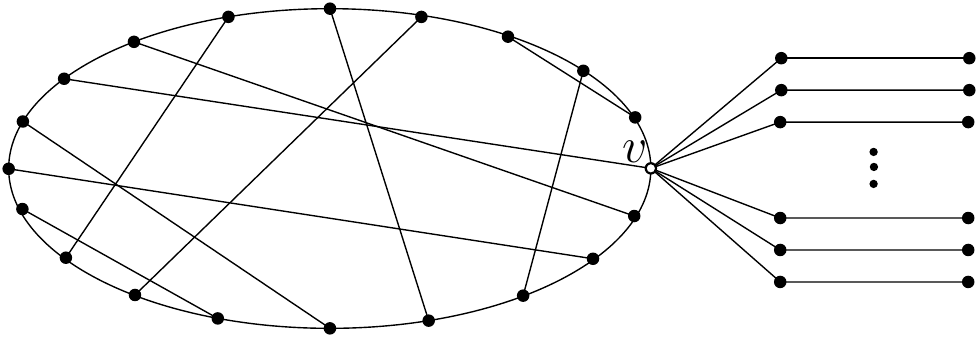}
\caption{An example for the construction in the proof of Theorem~\ref{thm:eulerian_flip_NP_complete} where $n = 20$.
}
\label{fig:hardness example}
\end{figure}
We claim that $G'$ can be made Eulerian using $n/2$ $1$-switches if and only if $G$ is Hamiltonian.
In the latter case, $G$ consists of a Hamiltonian cycle and a perfect matching. 
By switching the edges of the matching to the endpoints of the paths, we can make $G$ Eulerian with $n/2$ switches.

Assume now that $n/2$ switches are sufficient to make $G'$ Eulerian.
As $G'$ has $2n$ vertices of odd degree, every switch has to remove an edge that connects two odd-degree vertices and add an edge between two odd-degree vertices.
Therefore, the edges in the attached paths in $G'$ must not be switched and the only possibility is to switch the edges of a perfect matching $M$ in $G$ to a perfect matching among the nodes of degree 1 in $G'$.
As the resulting graph is Eulerian, it is connected, and therefore $G\setminus M$ must be a single cycle\,---\,a Hamiltonian cycle in~$G$.
\end{proof}

The proof of Theorem~\ref{thm:eulerian_flip_NP_complete} can be modified in a way that the graph $G'$ has maximum degree 4;
instead of adding $n$ paths to $v$, we can attach a tree with $n$ leaves whose inner nodes all have degree~2 or~4.

\section{The graph of 2-switches in Eulerian graphs is connected}\label{sec:connectedness}

Let us define the \emph{2-switch graph of Eulerian graphs} as the graph whose vertices are the Eulerian graphs with $n$ vertices and $m$ edges, and in which two vertices share an edge if and only if the corresponding Eulerian graphs can be obtained from each other by a 2-switch.
Theorem~\ref{thm:connected} thus states that the 2-switch graph of Eulerian graphs is connected.

\begin{definition}
A $k$-switch from a graph $G$ to $G'$ is called \emph{degree-preserving} if all vertices have the same degrees in $G$ and $G'$.
A $k$-switch is \emph{parity-preserving} if the parities of the degrees in $G$ and $G'$ are the same.
\end{definition}

We consider 2-switches on Eulerian graphs.
If $G$ is an Eulerian graph on which we perform a parity-preserving 2-switch, then the resulting graph $G'$ is Eulerian if and only if $G'$ is connected.

Theorem~\ref{thm:connected} follows from the following two results.

\begin{theorem}[{Taylor~\cite[Theorem~3.3]{taylor}}]\label{thm:taylor}
Let $G = (V,E)$ and $H = (V,F)$ be two simple connected labeled graphs where the vertices have the same degrees. 
Then there is a sequence $(G = G_0, G_1, \dots , G_{\ell} = H)$ of graphs such that each graph $G_i$ is connected and $G_{i+1}$ can be obtained from $G_i$ by a degree-preserving 2-switch.
\end{theorem}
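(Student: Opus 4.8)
The plan is to induct on the size of the symmetric difference $d(G,H) = |E \triangle F|$. When $d(G,H)=0$ we have $G=H$ and there is nothing to do. For the inductive step I want to exhibit a single degree-preserving $2$-switch that turns $G$ into a connected graph $G'$ with the same degrees and $d(G',H) < d(G,H)$; the claim then follows by applying the induction hypothesis to $G'$ and $H$.

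To locate such a switch, colour each edge of $E \setminus F$ red and each edge of $F \setminus E$ blue. Since $G$ and $H$ have the same degree at every vertex and the edges of $E \cap F$ contribute equally to both, every vertex is incident to as many red as blue edges. Hence the red--blue multigraph decomposes into edge-disjoint closed walks alternating between red and blue, and in particular it contains an alternating cycle $C = v_0 v_1 \cdots v_{2k-1} v_0$ with $v_{2i} v_{2i+1}$ red and $v_{2i+1} v_{2i+2}$ blue (indices mod $2k$). If $k=1$, the switch removing the red edges $v_0 v_1, v_2 v_3$ and adding the blue edges $v_1 v_2, v_3 v_0$ is degree-preserving and lowers $d(G,H)$ by $4$. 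If $k \ge 2$, the switch removing the red edges $v_0 v_1, v_2 v_3$ and adding the blue edge $v_1 v_2$ together with $v_0 v_3$ is degree-preserving, replaces $C$ by a shorter alternating cycle, and lowers $d(G,H)$ by at least $2$. In either case the switch is productive; what remains is to guarantee that it can be chosen so that connectivity is preserved.

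The main obstacle is precisely this connectivity requirement, and it is where the argument goes beyond the classical degree-sequence result. The key observation is: if removing two edges $ab$ and $cd$ disconnects a connected graph, then these two edges form an edge cut; in the tightest case a $2$-edge-cut splitting $V$ into parts $A$ and $B$ with both edges crossing it, and then of the two re-pairings $\{ac,bd\}$ and $\{ad,bc\}$ exactly one again crosses the cut and hence restores connectivity. I apply this to the removed red pair. If the \emph{productive} pairing is the one that crosses the cut, connectivity is preserved and we are done. The difficult case is when the productive pairing keeps $A$ and $B$ separated. Here I exploit that $H$ is connected: since the only $G$-edges across the cut are the two removed red edges, $H$ must contain a blue edge crossing from $A$ to $B$. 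Such a cut-crossing blue edge lies on some alternating cycle, and I would re-route the reduction through a productive switch involving it, which by construction bridges $A$ and $B$ and therefore leaves the graph connected.

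To turn this into a clean induction I would pick $C$, among all alternating cycles, so as to minimise the disconnecting obstructions (for instance a shortest alternating cycle, ruling out the chords that create the bad $2$-edge-cut configuration), and argue that a connectivity-preserving productive switch always exists. An alternative route that sidesteps the case analysis is a two-phase strategy: first use productive switches to make $G$ contain a fixed spanning tree $T$ of $H$ while staying connected, and then, with $T$ permanently present to certify connectivity, apply the classical unconstrained $2$-switch reduction to the remaining edges, whose degree sequences on $V$ agree because $T$ is common to both graphs. I expect the verification that every intermediate graph stays simple and connected, rather than the existence of productive switches, to be the technically delicate part.
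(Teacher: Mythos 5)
The paper itself does not prove this statement: it is Taylor's theorem, imported by citation, so your attempt can only be judged against what a correct proof would require. Judged that way, it has a genuine gap already at its foundation. The claim that the red--blue multigraph ``in particular contains an alternating cycle'' is false. Balanced red/blue degrees at every vertex give a decomposition into alternating \emph{closed walks}, but not into alternating \emph{simple cycles}, and a simple alternating cycle need not exist at all. Concretely, take $V=\{v,a,b,c,d\}$, let $G$ be the $5$-cycle $v\,a\,c\,d\,b\,v$ and $H$ the $5$-cycle $v\,c\,a\,b\,d\,v$; both are simple, connected and $2$-regular. The red edges are $va,vb,cd$ and the blue edges are $ab,vc,vd$. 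Every alternating closed structure here (e.g.\ the walk $v\,a\,b\,v\,c\,d\,v$) must revisit $v$, and no alternating simple cycle exists. Worse, no $2$-switch at all removes two red edges and adds two blue ones (no red pair and blue pair cover the same degree multiset), so the ``productive switch on $C$'' your induction needs does not exist in either of your two cases; the only way to decrease $|E\triangle F|$ is a switch that creates an edge lying in \emph{neither} graph (e.g.\ remove $va,cd$, add $vc,ad$), which your framework only permits along a cycle that is not there. Your sketch also never addresses simplicity of the added chord $v_0v_3$, which may already be an edge of $G$.

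The second, more serious gap is that the connectivity argument --- which is the entire content of Taylor's theorem beyond the classical unconstrained switch theorem --- is asserted rather than proved. In the difficult case you propose to ``re-route the reduction through a productive switch involving'' a blue edge crossing the cut and claim it preserves connectivity ``by construction.'' It does not: that re-routed switch removes two \emph{other} red edges, and nothing prevents those removals from disconnecting the graph across a different cut; you are back at the original problem with no minimality measure or induction that terminates the regress, and the assertion that ``a connectivity-preserving productive switch always exists'' is exactly what has to be proved (and, as the example above shows, is delicate even to formulate, since productive switches in your sense can fail to exist). The fallback two-phase plan does not close this hole either: phase one (absorbing a spanning tree $T$ of $H$ while staying connected) has the same circularity, and phase two is unsound as stated, because the classical switch theorem applied to $G-T$ and $H-T$ may require intermediate switches that add edges already present in $T$, violating simplicity; you would need, and do not provide, a version of the unconstrained theorem relative to a forbidden edge set.
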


\begin{figure}[t]
\centering
\includegraphics{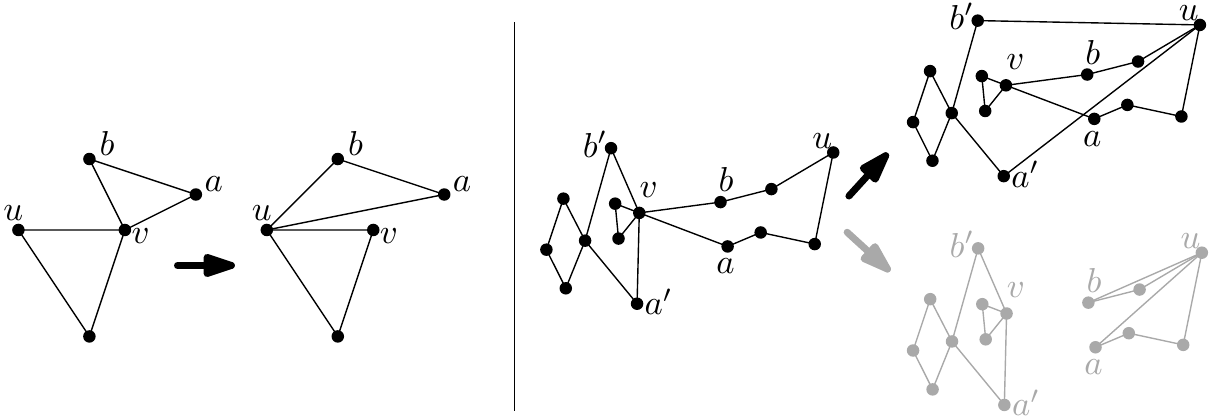}
\caption{Decreasing the degree of $v$ while increasing the degree of $u$ (left and right).
Connectivity can always be preserved (right).
The result of a disconnecting 2-switch is shown in gray.
}
\label{fig:two_increasing_switch}
\end{figure}

\begin{lemma}\label{lem:increasing}
Let $G = (V,E)$ be an Eulerian graph with two vertices $u$ and $v$ such that the degree $d_G(v)$ of $v$ in $G$ is larger than $d_G(u)$.
Then there is a parity-preserving 2-switch that increases the degree of $u$ and decreases the degree of $v$ such that the resulting graph remains connected.
\end{lemma}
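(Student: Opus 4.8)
The plan is to first pin down what the required switch can look like, and only then worry about connectivity. We need a parity-preserving 2-switch that strictly raises $d(u)$ and strictly lowers $d(v)$; since a single 2-switch changes any degree by at most $2$ and parity-preservation forces every net change to be even, the change must be exactly $+2$ at $u$ and $-2$ at $v$, with all other degrees unchanged. Hence the two inserted edges must both be incident to $u$ and the two deleted edges both incident to $v$, and parity at the remaining endpoints then forces the inserted edges to reconnect exactly the two ex-neighbours of $v$ to $u$. In other words, the only admissible switches delete $vx,vy$ and insert $ux,uy$ for distinct $x,y\in S:=N(v)\setminus(N(u)\cup\{u\})$, the condition $x,y\notin N(u)\cup\{u\}$ being precisely what keeps the result simple. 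So the lemma reduces to choosing a good pair $\{x,y\}\subseteq S$.

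Next I would verify that such a pair exists, i.e.\ $|S|\ge 2$. This follows from $d(v)\ge d(u)+2$ by splitting on whether $uv\in E$: in both cases one obtains $|S|\ge d(v)-d(u)\ge 2$, using $|N(u)\cap N(v)|\le d(u)$, and the sharper bound $|N(u)\cap N(v)|\le d(u)-1$ when $uv\in E$ (since then $v\in N(u)\setminus N(v)$). I would also record the structural input that an Eulerian graph is $2$-edge-connected (a bridge would leave an odd-degree vertex on one side), together with the consequence that every degree is even and positive, so $d(v)\ge 4$.

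The heart of the argument, and the main obstacle, is ensuring that $G'=G-vx-vy+ux+uy$ stays connected. Here I would analyse the components $C_1,\dots,C_t$ of $G-v$; by $2$-edge-connectivity each $C_i$ sends at least two edges to $v$. Deleting $vx,vy$ and inserting $ux,uy$ keeps $G'$ connected in every situation \emph{except} the ``bad'' one where $x,y$ lie in a common component $C_a$, the vertex $u$ also lies in $C_a$, and $v$ has exactly two edges into $C_a$ (namely $vx,vy$); then $v$ loses all contact with $C_a$ while the new edges stay inside $C_a$, splitting it off. In all other configurations $v$ still retains an edge to each component of $G-v$, or else one of $ux,uy$ connects $C_a$ to the component of $u$, so $G'$ remains connected.

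It then remains to show a non-bad pair can always be selected. If $|S|\ge 3$ this is easy: either two elements of $S$ lie in different components (immediately a good pair), or all of $S$ lies in one component $C_a$, which forces $v$ to have at least three edges into $C_a$ and hence makes every pair good. The delicate case is $|S|=2$, where the pair is forced; here I would prove that the bad case simply cannot arise. Indeed, in the bad case $uv\notin E$ (otherwise $u$ would be a neighbour of $v$ inside $C_a$ other than $x,y$), so $v$ has $d(v)-2\ge 2$ common neighbours with $u$. Each such common neighbour $w$ satisfies $w\neq x,y$ and hence lies outside $C_a$, because $N(v)\cap C_a=\{x,y\}$; but then the edge $uw$, with $u\in C_a$ and not incident to $v$, would join $C_a$ to the component of $w$ in $G-v$, contradicting that these are distinct components. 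This contradiction disposes of the $|S|=2$ case and completes the proof.
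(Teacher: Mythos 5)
Your proof is correct, and it takes a genuinely different route from the paper's. The paper argues by attempted repair: it picks any two neighbours $a,b$ of $v$ that are not adjacent to $u$, performs the switch, and if the result is disconnected it lets $K$ be the new component containing $v$, picks two neighbours $a',b'$ of $v$ inside $K$ (observing that these cannot be adjacent to $u$, else the first switch could not have disconnected the graph), and shows that switching $a'v,b'v$ to $a'u,b'u$ instead keeps everything connected via $a$, $b$, $a'$, $b'$ and $v$. You never perform a failed switch: you first show every admissible switch has the form $vx,vy\mapsto ux,uy$ with $x,y\in S=N(v)\setminus(N(u)\cup\{u\})$, then use bridgelessness of Eulerian graphs to characterize the disconnecting switches exactly (namely $x$, $y$ and $u$ all in one component of $G-v$ into which $v$ sends only the two edges $vx,vy$), and finally show a non-bad pair always exists, with a clean contradiction via a common neighbour of $u$ and $v$ in the forced case $|S|=2$. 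Your analysis is longer but fully explicit, yields a complete characterization of which switches preserve connectivity, and is more careful on one point the paper glosses over: the chosen neighbours of $v$ must also be distinct from $u$ itself (your bound $|S|\ge d(v)-d(u)\ge 2$, split on whether $uv\in E$, handles this correctly). The paper's version is shorter because it needs only a single fallback rather than a full case analysis; both proofs ultimately hinge on the same obstruction.
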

\begin{proof}
We need to find two vertices $a$ and $b$ such that $av$ and $bv$ are edges of $G$, while $au$ and $bu$ are not.
Indeed, as $G$ is Eulerian, we have $d_G(u) \leq d_G(v) - 2$, so there are two vertices $a$ and $b$ in the neighborhood of $v$ that are not in the neighborhood of~$u$.
See \figurename~\ref{fig:two_increasing_switch}~(left).
If the graph stays connected after replacing $av$ and $bv$ by $au$ and $bu$, we are done, so suppose it is not connected and let $K$ be the connected component containing $v$.
See \figurename~\ref{fig:two_increasing_switch}~(right).
Note that $d_G(v) \geq 4$, so $d_K(v) \geq 2$.
Hence, there exist two edges $a'v$ and $b'v$ in $K$; $a'$ and $b'$ are not in the neighborhood of $u$ in $G$ (otherwise, the switch would not have disconnected the graph).
We perform a 2-switch on $G$ replacing $a'v$ and $b'v$ by $a'u$ and $b'u$ and argue that the resulting graph is connected.
The vertex $v$ remains in the same connected component as $u$ (there is a path via $a$ or $b$), and all other vertices of $K$ are connected to $u$ by a path via $a'$, $b'$, or~$v$.
Since also for all other vertices there is still a connection to $u$, the lemma follows.
\end{proof}

Using Lemma~\ref{lem:increasing}, we can transform any Eulerian graph $G$ on a vertex set $V = \{v_1, \dots, v_n\}$ into an Eulerian graph $G'$ such that the vertices $\{v_1, \dots, v_i\}$ have degree $d$, and the vertices $\{v_{i+1}, \dots, v_n\}$ have degree $d+2$, for some $i \leq n$ and $d$ (which only depend on $n$ and $m$).
Given two Eulerian graphs, both with $n$ vertices and $m$ edges, we can transform both with 2-switches into Eulerian graphs where the vertices have the same degrees.
Theorem~\ref{thm:taylor} tells us that we can transform one such graph into any other, and thus Theorem~\ref{thm:connected} follows.

Given Theorem~\ref{thm:connected}, it is natural to ask whether we can efficiently determine the distance of two Eulerian graphs in the 2-switch graph.
If the two graphs are (labeled) cycles, the answer is negative since this corresponds to sorting circular permutations by reversals, a problem that was shown to be NP-hard by Solomon et al.~\cite{solomon2003sorting}.

\section{Generating an Eulerian graph with given parameters}
There is a large body of work on the problem of counting
the number of Eulerian graphs with a given number of vertices, see, e.g.,~\cite{read1962euler,robinson1969enumeration}.
The following related question seems to have attracted less attention:
For which combinations of number of edges and number of vertices do there exist Eulerian graphs?
Observe that any graph with vertices of even degree at least 2 can be made Eulerian by a sequence of 2-switches: pick one edge in each of two disconnected components and perform a 2-switch.
Hence, the question boils down to valid combinations of cardinalities for such graphs.
Given the desired numbers, one way would thus be to devise a corresponding degree sequence that satisfies the conditions of the Erd\H{o}s-Gallai theorem~\cite{erdos_gallai}.
As our computational experiments on 2-switch graphs required a construction of an Eulerian graph with the given elements, we explicitly identified such pairs in Theorem~\ref{thm:generatingEulerianGraphs}, whose proof is given in the appendix.

\begin{theorem}
\label{thm:generatingEulerianGraphs}
Let $P \subset \mathbb{N}^2$ be the set of all pairs of integers $(n,m)$ for which
\begin{enumerate}
\item $3\leq n \leq m \leq \binom{n}{2}$,
\item $m \not= \binom{n}{2} - i$ for $n$ odd and $i \in \{1,2\}$,
\item $m \not= \binom{n}{2} - i$ for $n$ even and $i \in \{0,1,\ldots,\frac{n}{2}-1\}$.
\end{enumerate}
There exists a simple Eulerian graph with $n$ vertices and $m$ edges if and only if $(n,m) \in P$.
\end{theorem}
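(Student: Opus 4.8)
The plan is to remove the connectivity requirement first and reduce the whole statement to a question about even degree sequences. Observe that a connected graph on $n\ge 3$ vertices in which every degree is even automatically has minimum degree at least $2$, hence is Eulerian. Conversely, as already observed in this section, any (possibly disconnected) graph whose degrees are all even and at least $2$ can be turned into a connected one by repeatedly picking an edge in each of two components and performing the $2$-switch joining them; this keeps the graph simple and preserves every degree. Thus there is an Eulerian graph with $n$ vertices and $m$ edges if and only if there is a \emph{graphical} sequence $d_1,\dots,d_n$ of even integers with $2\le d_i\le n-1$ and $\sum_i d_i = 2m$. I would then spend the rest of the proof of Theorem~\ref{thm:generatingEulerianGraphs} showing that such a sequence exists exactly when $(n,m)\in P$, testing graphicality with the Erd\H{o}s--Gallai theorem~\cite{erdos_gallai}.

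For necessity, three of the bounds are immediate: each $d_i\ge 2$ gives $2m\ge 2n$, and each $d_i\le n-1$ gives $m\le\binom n2$; when $n$ is even every even degree is in fact at most $n-2$, so $2m\le n(n-2)$, i.e.\ $m\le\binom n2-\tfrac n2$, which is condition~3. The remaining exclusions for $n$ odd I would obtain from the complement: if $G$ is Eulerian and $n$ is odd, then every vertex of $\overline{G}$ has degree $n-1-d_i$, which is even, so $\overline{G}$ is an even graph. The edge set of an even graph decomposes into cycles, each of length at least $3$, so $\overline{G}$ has either $0$ or at least $3$ edges; hence $\binom n2-m=|E(\overline{G})|\notin\{1,2\}$, which is condition~2.

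For sufficiency I would realize every admissible $m$ by a two-value even sequence. For even $r$ with $2\le r\le n-2$ and $0\le a\le n$ consider $S_{r,a}=((r+2)^{a},r^{\,n-a})$ (taking $r+2\le n-1$ whenever $a\ge 1$); its degree sum is $rn+2a$, so it realizes $m=\tfrac{rn}{2}+a$. As $r$ runs through the even values up to $n-3$ (resp.\ $n-4$ for $n$ even) and $a$ runs from $0$ to $n$, these pairs cover, as values of $m$, the whole interval $[\,n,\binom n2\,]$ (resp.\ $[\,n,\binom n2-\tfrac n2\,]$). The point is that the only Erd\H{o}s--Gallai inequality that can fail for $S_{r,a}$ is the one at index $k=a$: for $a\le r$ it reduces to $r+2\le n-1$ and always holds, while for $a>r$ it reduces to the single inequality $a(2r+3-a)\le nr$. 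A short computation shows this holds for every $a$ once $r\le n-5$, and that in the top band $r=n-3$ (for $n$ odd) it fails \emph{precisely} for $a\in\{n-2,n-1\}$, that is, precisely for $m\in\{\binom n2-2,\binom n2-1\}$, while $a=n-3$ and $a=n$ (the latter being $K_n$) remain graphical. For $n$ even the top band $r=n-4$ satisfies the inequality for all $a$, so the entire interval up to $\binom n2-\tfrac n2$ is covered. Hence the constructible values of $m$ are exactly the elements of $P$.

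The main obstacle is this last paragraph: pinning down for which $(r,a)$ the sequence $S_{r,a}$ is graphical and checking that the non-graphical cases land exactly on the excluded values $\binom n2-1,\binom n2-2$. This rests on (i) justifying that $k=a$ is the only critical Erd\H{o}s--Gallai index for a two-value sequence, and (ii) the elementary but slightly fiddly analysis of the quadratic inequality $a(2r+3-a)\le nr$ across the relevant bands, including the small cases $n\in\{3,4,6\}$ where some bands are empty. Everything else—the reduction and the necessity direction—I expect to be routine.
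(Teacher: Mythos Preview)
Your proposal is correct and takes a genuinely different route from the paper. The paper proves sufficiency by induction on $n$: whenever $(n-1,m-1)\in P$ one subdivides an edge of a smaller Eulerian graph, and the remaining boundary cases $(n-1,m-1)\notin P$ are handled by a short auxiliary lemma that explicitly builds Eulerian graphs for $m\ge\binom{n}{2}-n+1$ (removing a cycle from $K_n$ for $n$ odd, or attaching a new vertex to a $K_{n-1}$ minus a matching for $n$ even), plus a couple of ad hoc constructions. Your approach instead first strips away connectivity via the $2$-switch merging observation (which the paper also states), reducing everything to the existence of an even graphical sequence with minimum $2$, and then realizes each admissible $m$ by a two-value sequence $S_{r,a}$ checked against Erd\H{o}s--Gallai. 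Your necessity argument for odd $n$ via the complement (an even graph has $0$ or $\ge 3$ edges) is cleaner than the paper's direct inspection. On the sufficiency side the trade-off is that you replace the paper's case split on $(n-1,m-1)$ by the bookkeeping you flag yourself: that only the index $k=a$ is critical (this is the Tripathi--Vijay reduction of Erd\H{o}s--Gallai to indices where $d_k>d_{k+1}$, or can be checked by hand here), the quadratic $a(2r+3-a)\le nr$ in the top bands, and the small $n\in\{3,4\}$ where the band description degenerates. None of these is a real obstacle. One cosmetic point: your claimed threshold $r\le n-5$ is slightly pessimistic---the inequality already holds for all $a$ once $r\le n-4$ and $n\ge 6$---but this only simplifies your case analysis. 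The paper's proof has the advantage of producing an explicit Eulerian graph directly, without passing through a degree-sequence realization followed by connecting $2$-switches; yours has the advantage of a uniform mechanism and a shorter necessity argument.
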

It is worth to stress that our proof is constructive; that is, for a given pair $(n,m)\in P$, we provide an Eulerian $n$-vertex graph with $m$ edges.  

\paragraph{Open problems.}
\begin{itemize}
\item We know that finding the 2-switch distance is NP-hard for labeled cycles (i.e., $n=m$).
Does the problem remain hard for denser graphs?
For example, it is clearly not hard for $m = \binom{n}{2}-c$ for any constant $c$.
Similar results exist for graphs with a fixed degree sequence~\cite{bereg}.
What about the hardness of $k$-switches for $k > 2$?
Is the problem hard for unlabeled graphs?

\item Is the 2-switch graph of Eulerian graphs Hamiltonian?
Our computer experiments showed that this is true for Eulerian graphs of up to eight vertices.
The analogous question for graphs with fixed degree sequence was raised by Brualdi~\cite{brualdi} and is apparently still open as well~\cite{barrus}.
\end{itemize}

\paragraph*{Acknowledgments}
We thank Herbert Fleischner and Patrick Schnider for valuable discussions. 



\bibliographystyle{abbrvnat} 
\bibliography{references}

\begin{thebibliography}{15}
\providecommand{\natexlab}[1]{#1}
\providecommand{\url}[1]{\texttt{#1}}
\expandafter\ifx\csname urlstyle\endcsname\relax
  \providecommand{\doi}[1]{doi: #1}\else
  \providecommand{\doi}{doi: \begingroup \urlstyle{rm}\Url}\fi

\bibitem[Barrus(2016)]{barrus}
M.~D. Barrus.
\newblock On realization graphs of degree sequences.
\newblock \emph{Discrete Mathematics}, 339\penalty0 (8):\penalty0 2146--2152,
  2016.

\bibitem[Bereg and Ito(2017)]{bereg}
S.~Bereg and H.~Ito.
\newblock Transforming graphs with the same graphic sequence.
\newblock \emph{{JIP}}, 25:\penalty0 627--633, 2017.

\bibitem[Bose and Hurtado(2009)]{flips_planar_graphs}
P.~Bose and F.~Hurtado.
\newblock Flips in planar graphs.
\newblock \emph{Comput. Geom.}, 42\penalty0 (1):\penalty0 60--80, 2009.

\bibitem[Brualdi(1980)]{brualdi}
R.~A. Brualdi.
\newblock Matrices of zeros and ones with fixed row and column sum vectors.
\newblock \emph{Linear Algebra Appl.}, 33:\penalty0 159--231, 1980.

\bibitem[Burzyn et~al.(2006)Burzyn, Bonomo, and
  Dur{\'{a}}n]{edge_modification_problems}
P.~Burzyn, F.~Bonomo, and G.~Dur{\'{a}}n.
\newblock {NP}-completeness results for edge modification problems.
\newblock \emph{Discrete Appl. Math.}, 154\penalty0 (13):\penalty0 1824--1844,
  2006.

\bibitem[Cygan et~al.(2014)Cygan, Marx, Pilipczuk, Pilipczuk, and
  Schlotter]{eulerian_deletion}
M.~Cygan, D.~Marx, M.~Pilipczuk, M.~Pilipczuk, and I.~Schlotter.
\newblock Parameterized complexity of {E}ulerian deletion problems.
\newblock \emph{Algorithmica}, 68\penalty0 (1):\penalty0 41--61, 2014.

\bibitem[Dabrowski et~al.(2016)Dabrowski, Golovach, van~'t Hof, and
  Paulusma]{eulerian_editing}
K.~K. Dabrowski, P.~A. Golovach, P.~van~'t Hof, and D.~Paulusma.
\newblock Editing to {E}ulerian graphs.
\newblock \emph{J. Comput. Syst. Sci.}, 82\penalty0 (2):\penalty0 213--228,
  2016.

\bibitem[Dorn et~al.(2013)Dorn, Moser, Niedermeier, and
  Weller]{eulerian_extension}
F.~Dorn, H.~Moser, R.~Niedermeier, and M.~Weller.
\newblock Efficient algorithms for {E}ulerian extension and {R}ural {P}ostman.
\newblock \emph{{SIAM} J. Discrete Math.}, 27\penalty0 (1):\penalty0 75--94,
  2013.

\bibitem[Edmonds and Johnson(1973)]{edmonds_johnson}
J.~Edmonds and E.~L. Johnson.
\newblock Matching, {E}uler tours and the {C}hinese postman.
\newblock \emph{Math. Program.}, 5\penalty0 (1):\penalty0 88--124, 1973.

\bibitem[Erd{\H{o}}s and Gallai(1960)]{erdos_gallai}
P.~Erd{\H{o}}s and T.~Gallai.
\newblock Gr{\'a}fok el{\H{o}}{\'i}rt foksz{\'a}m{\'u} pontokkal.
\newblock \emph{Matematikai Lapok}, 11:\penalty0 264--274, 1960.
\newblock In Hungarian.

\bibitem[Garey et~al.(1976)Garey, Johnson, and Stockmeyer]{garey1976some}
M.~R. Garey, D.~S. Johnson, and L.~Stockmeyer.
\newblock Some simplified {NP}-complete graph problems.
\newblock \emph{Theor. Comp. Sci.}, 1\penalty0 (3):\penalty0 237--267, 1976.

\bibitem[Read(1962)]{read1962euler}
R.~C. Read.
\newblock Euler graphs on labelled nodes.
\newblock \emph{Canadian Journal of Mathematics}, 14:\penalty0 482--486, 1962.

\bibitem[Robinson(1969)]{robinson1969enumeration}
R.~W. Robinson.
\newblock Enumeration of euler graphs.
\newblock \emph{Proof techniques in graph theory}, pages 147--153, 1969.

\bibitem[Solomon et~al.(2003)Solomon, Sutcliffe, and
  Lister]{solomon2003sorting}
A.~Solomon, P.~Sutcliffe, and R.~Lister.
\newblock Sorting circular permutations by reversal.
\newblock In \emph{In Proc. WADS 2003}, pages 319--328, 2003.

\bibitem[Taylor(1981)]{taylor}
R.~Taylor.
\newblock Contrained switchings in graphs.
\newblock In K.~L. McAvaney, editor, \emph{Combinatorial Mathematics VIII},
  pages 314--336. Springer, 1981.

\end{thebibliography}

\newpage
\appendix

\section{Proof of Theorem~\ref{thm:generatingEulerianGraphs}}
Let $(n,m) \not\in P$. If $n < 3, n > m$ or if $m > \binom{n}{2}$, then clearly there is no Eulerian graph with $n$ vertices and $m$ edges.
If $n$ is odd, the only graph with $n$ vertices and $m = \binom{n}{2}-i$ edges, where $i \in \{1,2\}$, is the complete graph without 1 or 2 edges, which contains a vertex of odd degree and is therefore not Eulerian. 
If $n$ is even, then all vertices of the complete graph with $n$ vertices have an odd degree and after removing $i \in \{0,1,\ldots,\frac{n}{2}-1\}$ edges there is still some vertex of odd degree, and hence any resulting graph cannot be Eulerian.

For $(n,m) \in P$ we show by induction on $n$ that there is an Eulerian graph with $n$ vertices and $m$ edges.
As a base case let $(3,m) \in P$ which implies that $m = 3$.
The triangle is the witness graph in this case.
For $n > 3$ consider a pair $(n,m) \in P$.
If $(n-1,m-1) \in P$, then let $G = (V,E)$ be an Eulerian graph with $n-1$ vertices and $m-1$ edges and let $e = \{u,v\} \in E$.
For some $w\notin V$, the graph $(V\cup \{w\},(E\setminus \{e\}) \cup \{\{u,w\},\{v,w\}\})$ is connected, has $n$ vertices and $m$ edges, and all vertices have an even degree.

It remains to consider the case that $(n-1,m-1) \notin P$. 
We first prove the following auxiliary lemma.
\begin{lemma}
\label{lem:large-case-lemma}
If $(n,m) \in P$ and $m \geq \binom{n}{2} - n + 1$, then there is an Eulerian graph on $n$ vertices and $m$ edges.
\end{lemma}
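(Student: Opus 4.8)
The plan is to realize the required graph as $K_n$ with a small edge set deleted. Write $r = \binom{n}{2} - m$; the hypothesis $m \ge \binom{n}{2} - n + 1$ means $0 \le r \le n-1$, so I must delete an $r$-edge subgraph $S \subseteq K_n$ and argue that $G = K_n \setminus S$ is Eulerian. Since every vertex has degree $n-1$ in $K_n$, the degree of $v$ in $G$ is $(n-1) - \deg_S(v)$, which is even exactly when $\deg_S(v)$ is even for odd $n$ and odd for even $n$. Thus for odd $n$ I need $S$ to have all even degrees, and for even $n$ all odd degrees. The membership $(n,m)\in P$ is exactly what makes such an $S$ with $r$ edges possible: it rules out $r\in\{1,2\}$ for odd $n$ (condition~2) and $r < n/2$ for even $n$ (condition~3), which are precisely the edge counts admitting no suitable degree sequence.

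Connectivity of $G$ I would control by a single counting observation: a cut $(A, V \setminus A)$ survives the deletion only if all $|A|\cdot|V\setminus A|$ crossing edges lie in $S$, i.e.\ $|A|\,|V\setminus A| \le r \le n-1$. Since this product is minimized by a singleton, giving value $n-1$, it is impossible for $r \le n-2$, and for $r = n-1$ it can occur only if some vertex has all $n-1$ of its edges in $S$. Hence $G$ is connected as soon as $|E(S)| \le n-1$ and $\Delta(S) \le n-2$; the second condition is automatic unless $r = n-1$. This reduces the lemma to exhibiting, in each parity class, an $r$-edge subgraph $S$ with the correct degree parities and $\Delta(S) \le n-2$.

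For odd $n$ this is immediate: take $S = \emptyset$ when $r = 0$ and a single cycle $C_r$ when $3 \le r \le n-1$ (the excluded $r\in\{1,2\}$ never arise), so every vertex has even degree in $S$ and $\Delta(S) \le 2$. For even $n$ the admissible range is $n/2 \le r \le n-1$; writing $k = r - n/2 \in \{0,\dots,n/2-1\}$, I would take $S$ with degree sequence $(3^{\,k}, 1^{\,n-k})$, which has the right (odd) parities and, crucially, maximum degree $3 \le n-2$. For $k=0$ this is a perfect matching; for $k \ge 1$ one lays the $k$ degree-$3$ vertices on a path, attaches private leaves until each reaches degree $3$, and pairs up the remaining leaves by a matching. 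The bound $k \le n/2 - 1$ is exactly what guarantees enough leaves, and a direct count confirms that $S$ uses $r$ edges.

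The step I expect to be the real obstacle is the boundary case $r = n-1$ for even $n$, where the most tempting choice---deleting a star $K_{1,n-1}$ together with a matching---would isolate the star's centre and destroy connectivity. Forcing $\Delta(S) \le n-2$ throughout is precisely what sidesteps this, and the $(3^{\,k},1^{\,n-k})$ construction does so automatically because its maximum degree is only $3$. Consequently the bulk of the care goes into the degree-sequence realization and the edge-count verification, after which connectivity follows for free from the cut-counting observation.
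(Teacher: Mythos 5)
Your proof is correct, and for odd $n$ it coincides with the paper's construction (delete a cycle of length $r$ from $K_n$); for even $n$ you take a genuinely different route. The paper builds the graph ``from below'': it takes $K_{n-1}$ minus a matching of size $\ell = n-1-j$ and attaches a new vertex to the $2\ell$ matched endpoints, which in your language means deleting from $K_n$ a star together with a matching. You instead delete a subgraph $S$ with degree sequence $(3^{k},1^{n-k})$, realized as a path with pendant leaves plus a matching, and you dispose of connectivity once and for all via the cut bound $|A|\,|V\setminus A|\ge n-1$, reducing everything to $|E(S)|\le n-1$ and $\Delta(S)\le n-2$. This buys a uniform treatment---the parity conditions defining $P$ appear exactly as the realizability conditions for the degree sequence of $S$, and connectivity never has to be re-argued per case---and, more substantively, your insistence on $\Delta(S)\le n-2$ repairs a boundary case that the paper's construction actually misses: for even $n$ and $j=n-1$ the paper's matching has size $\ell=0$, so its new vertex is attached to nothing and the construction degenerates to $K_{n-1}$ plus an isolated vertex, precisely the ``delete a star $K_{1,n-1}$'' failure you call out and avoid (and the paper does invoke the lemma with $j=n-1$ and $n$ even in its second bullet case, so this is a genuine, if easily fixable, slip there). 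The only cosmetic caveat in your write-up is that ``$3\le n-2$'' requires $n\ge 5$; this is harmless, since for $n=4$ the only admissible case is $k=0$, a perfect matching of maximum degree $1$.
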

\begin{proof}
Let $j = \binom{n}{2}-m$, which implies that $j<n$.
If $n$ is odd, then $j > 2$.
We remove a cycle of length $j$ from $K_n$.
The resulting graph is Eulerian with $n$ vertices and $m$ edges.
If $n$ is even, then $j \geq \frac{n}{2}$, as $(n,m) \in P$.
Consider the graph $K_{n-1}$ minus a matching of size $\ell = n-1-j \leq n-1-\frac{n}{2} = \frac{n}{2}-1$. Add to this graph a vertex and connect it to all vertices adjacent to the removed matching.
The resulting graph is connected, all vertices have an even degree and it has $n$ vertices and $\binom{n-1}{2} + \ell = \binom{n}{2}-j$ edges.
\end{proof}

We now consider the different cases for which $(n-1,m-1) \not\in P$:

\begin{itemize}
\item If $n-1 < 3$, then $n = 3$ which we have already covered. 
Since $(n,m)\in P$, we must have $n-1 \leq m-1$.
If $m-1 > \binom{n-1}{2}$, then $m > \binom{n}{2} - (n-2)$. 
By Lemma~\ref{lem:large-case-lemma} there is an Eulerian graph on $n$ vertices and $m$ edges.
\item If $n-1$ is odd and $m -1 = \binom{n-1}{2}-i$ for $i \in \{1,2\}$, then $m = \binom{n}{2}-(n-1)-i+1 = \binom{n}{2}-j$ where $j \in \{n,n-1\}$. 
If $j= n-1$ we are done by~Lemma~\ref{lem:large-case-lemma}.
If $j = n$, it is enough to consider $n \geq 6$ since $(n,m) \in P$.
Consider the complete graph $K_n$ and partition the vertices into two parts $V$ and $U$ of size $\frac{n}{2}$ each.
Remove from $K_n$ a perfect matching between $V$ and $U$ and a cycle of length $\frac{n}{2}$ that covers $V$.
The resulting graph has the desired number of edges and vertices, and each vertex has even degree.
It also clearly stays connected.  
\item If $n-1$ is even and $m -1 = \binom{n-1}{2}-i$ for some $i \in \{0,1,\ldots,\frac{n-1}{2}-1\}$, then $m = \binom{n}{2} - (n-2) - i = \binom{n}{2} - j$ where $j \in \{n-2,n-1,\ldots,\frac{3n-7}{2}\}$.
If $j \in \{n-2,n-1\}$, then we are done by~Lemma~\ref{lem:large-case-lemma}.
Otherwise $n \leq j \leq \frac{3n-7}{2}$.
We only need to consider odd $n \geq 5$.
There is an Eulerian graph $G$ on $n-1$ vertices and $j$ edges by the induction hypothesis.
Take the graph $K_{n-1}$, remove the edges in $G$ and add a new node connected to all vertices.
The result is a connected graph, all vertices have an even degree and it has $n$ vertices and $\binom{n-1}{2} - j + n-1$ = $\binom{n}{2} - j$ edges, as required.\qedhere
\end{itemize}

\end{document}